\newtheorem{lemma}{Lemma}[section]
\newtheorem{remark}[lemma]{Remark}
\newcommand{\tx}[1]{\text{tx}(l)}
\newcommand{\rx}[1]{\text{rx}(l)}
\newcommand{\separator}{
  \begin{center}
    \rule{\columnwidth}{0.3mm}
  \end{center}
}
\newenvironment{separation}
{ \vspace{-0.3cm}
  \separator
  \vspace{-0.25cm}
}
{
  \vspace{-0.5cm}
  \separator
  \vspace{-0.15cm}
}
\newcommand{\bi}{\begin{itemize}}
\newcommand{\ei}{\end{itemize}}
\newcommand{\be}{\begin{enumerate}}
\newcommand{\ee}{\end{enumerate}}
\newcommand{\beq}{\begin{eqnarray*}}
\newcommand{\eeq}{\end{eqnarray*}}
\newcommand{\beqn}{\begin{eqnarray}}
\newcommand{\eeqn}{\end{eqnarray}}
\begin{document}

\title{REFIM: A Practical Interference Management in Heterogeneous Wireless Access Networks}

\author{
\authorblockN{
Kyuho~Son,~\IEEEmembership{Member,~IEEE}, Soohwan~Lee,~\IEEEmembership{Student Member,~IEEE}, Yung~Yi,~\IEEEmembership{Member,~IEEE} and Song~Chong,~\IEEEmembership{Member,~IEEE}
}
\thanks{
Manuscript received June 8, 2010; revised December 3, 2010. This work was supported by IT R\&D program of MKE/KEIT [KI002137, Ultra Small Cell Based Autonomic Wireless Network]. Some part of this work was presented at WiOpt 2010, Avignon, France. This work was performed while the first author was with KAIST as a Ph.D. candidate.

K. Son is with the Department of Electrical Engineering, Viterbi School of Engineering, University of Southern California, Los Angeles, CA 90089 (e-mail: kyuho.son@usc.edu).

S. Lee, Y. Yi and S. Chong are with the Department of Electrical Engineering, Korea Advanced Institute of Science and Technology (KAIST), Daejeon 305-701, Korea (e-mail: shlee@lanada.kaist.ac.kr, \{yiyung, songchong\}@kaist.edu).

Digital Object Identifier 10.1109/JSAC.2011.xxxxxx.
}}

\maketitle

\begin{abstract}
Due to the increasing demand of capacity in wireless cellular networks, the small cells such as pico and femto cells are becoming more popular to enjoy a spatial reuse gain, and thus cells with different sizes are expected to coexist in a complex manner. In such a heterogeneous environment, the role of interference management (IM) becomes of more importance, but technical challenges also increase, since the number of cell-edge users, suffering from severe interference from the neighboring cells, will naturally grow. In order to overcome low performance and/or high complexity of existing static and other dynamic IM algorithms, we propose a novel low-complex and fully distributed IM scheme, called {\em REFIM} (REFerence based Interference Management), in the downlink of heterogeneous multi-cell networks. We first formulate a general optimization problem that turns out to require intractable computation complexity for global optimality. To have a practical solution with low computational and signaling overhead, which is crucial for low-cost small-cell solutions, e.g., femto cells, in REFIM, we decompose it into per-BS (base station) problems based on the notion of {\em reference user} and reduce feedback overhead over backhauls both temporally and spatially. We evaluate REFIM through extensive simulations under various configurations, including the scenarios from a real deployment of BSs. We show that, compared to the schemes without IM, REFIM can yield more than 40\% throughput improvement of cell-edge users while increasing the overall performance by 10$\sim$107\%. This is equal to about 95\% performance of the existing centralized IM algorithm (MC-IIWF) that is known to be near-optimal but hard to implement in practice due to prohibitive complexity. We also present that as long as interference is managed well, the spectrum sharing policy can outperform the best spectrum splitting policy where the number of subchannels is optimally divided between macro and femto cells.
\end{abstract}


\begin{IEEEkeywords}
Interference management, heterogeneous wireless access networks, femto cells, reference user, power control, user scheduling, feedback reduction, distributed algorithm;
\end{IEEEkeywords}

\IEEEpeerreviewmaketitle

\section{Introduction}
\label{section_introduction}

Many researchers from networking and financial sectors forecast that by 2014, the total mobile data traffic throughout the world will grow exponentially and reach about 3.6 exabytes per month, 39 times increase from 2009 \cite{refCisco,refDataeverywhere}. Pushed by this explosive demand mainly from bandwidth-hungry multimedia and Internet-related services in broadband wireless cellular networks, communication engineers seek to maximally exploit the spectral resources in all available dimensions.

Small cells such as pico and femto cells, seem to be one of the most viable and economic solutions \cite{refChandrasekharFemtocell}. Of recent significant interest is the femto cell designed for usage in a home or an office and deployed by users, utilizing user's Internet connection, e.g., cable or DSL (Digital Subscriber Line) Internet services as a backhaul. Macro and pico cells are controlled by mobile network operators and use dedicated backhauls. Small cells are also considered as a way of incrementally increasing coverage and/or capacity inside the initial deployment of macro cells. In addition to the advantages of spectrum reuse efficiency, the network operators are also attracted by financial benefits because small cells can reduce both capital (e.g., hardware) and operating (e.g., electricity, site lease and backhaul) expenditures.

In heterogeneous multi-cell networks with a mixture of macro and small cells, interference is a major obstacle that can impair the potential gain of small cells and its pattern is highly diverse \cite{ref3GPP_interference}, e.g., interferences in macro-to-macro, femto-to-femto, and macro-to-femto. As the number of small cells increases, the number of users at cell edges suffering from low throughput due to severe interference also grows. In particular, femto base stations (BSs) are installed in an ad-hoc manner without being planned by users, not the network operators, which also increases the technical challenges of interference management (IM).

To mitigate interference, a traditional frequency reuse or more enhanced schemes such as fractional frequency reuse (FFR) \cite{refGiuliano} and its variations \cite{refSonDynamic2, ref3GPP_SFR} can be utilized. All these schemes represent static IM algorithms for pure macro cell networks, where a specific reuse pattern is determined a priori by the network operator at offline. However, in reality, BSs are not uniformly deployed over the network. In particular, femto BSs are purely controlled by users, which implies that they may often be installed by users and even existing ones may be turned on/off dynamically. Under this situation, having the static reuse pattern is naturally inefficient. Recently, several dynamic IM algorithms have been proposed to address this problem \cite{refDasDynamic, refSonAdaptive, refGjendemsjBinary, refVenturinoCoordinated, refStolyar2, refEricssonFemto, refChandrasekharUplink}. They can significantly improve the performance over the static schemes, but many of them suffer from prohibitively high complexity and message passing among neighboring BSs.

In a single-carrier multi-cell network setup, several BS coordination schemes \cite{refDasDynamic, refSonAdaptive, refGjendemsjBinary} have been proposed under the assumption of binary power control, i.e., each BS transmits data with its given maximum power or zero. There are several recent works in a multi-carrier multi-cell network \cite{refVenturinoCoordinated, refStolyar2}. Venturino {\em et al.} \cite{refVenturinoCoordinated} proposed several algorithms which perform multiple iteration loops in a slot for user scheduling and power allocation. Although they can achieve near-optimal performance, all of them are centralized algorithms which are too complex to be implemented in practice. Stolyar {\em et al.} \cite{refStolyar2} proposed algorithms that adjust BS powers much more slowly than per-slot user scheduling. This time-scale separation does simplify the problem solution and reduce the complexity, but may lead to non-negligible performance loss. To tackle the cross-tier interference between macro and femto cells, there also have been several approaches by Sprint, Ericson \cite{refEricssonFemto} and Chandrasekhar {\em et al.} \cite{refChandrasekharUplink} that adjust the transmit powers of femto BSs based on the relative locations with respect to the macro BS. The power control also has been handled in ad-hoc networks. Chiang {\em et al.} \cite{refChiangPower} showed that in high SINR (signal to interference plus noise ratio) regime nonconvex power control optimization problems can be transformed into convex optimization problems through a geometric programming technique. In addition, there has been research on mitigation interference in a MIMO (multiple input multiple output) setting \cite{refZakhourDistributed}.

The part of dynamic IM algorithms, in particular, the power control component, was studied in the context of multi-tone DSL networks (see \cite{refTsiaflakisDistributed} and the references therein for a nice survey). Indeed, the wired multi-tone DSL model with crosstalk can be interpreted as a special case of the wireless multi-carrier cellular model with inter-cell interference when (i) only one user exists per cell and (ii) wireless channels are stationary and (iii) distributed operations among BSs are not crucial. In fact, the power control of our proposed solution is motivated by ASB (Autonomous Spectrum Balancing) \cite{refTsiaflakisDistributed,refCendrillonAutonomous} in the DSL network that uses the idea of reference line. However, in the IM over multi-cell wireless networks, much more challenging issues still remain for practical implementation, e.g., joint operation with multi-user scheduling in each cell, dynamic selection of reference users over time-varying channels, and small message passing among neighboring cells.

The fundamental challenges in the dynamic IM are that {\em (i)} BS power control problem itself (even if scheduled users in each BS are fixed) is formulated by a highly nonconvex optimization\cite{refVenturinoCoordinated, refStolyar2,refChiangPower}, {\em (ii)} it is tightly coupled with multi-user scheduling, and {\em (iii)} heavy message passing is usually required to coordinate BS powers. In this paper, we aim at developing an IM scheme consisting of joint power allocation and user scheduling which is practical in terms of low complexity and small message passing, but yet the scheme achieves near-optimal performance. Low complexity and small message passing is particularly essential for low-cost solutions such as femto BSs because they are typically made of cheap devices for price competitiveness and connected to the low-speed residential cable or DSL Internet connections, not to the high-speed dedicated backhauls.

Another important issue in heterogeneous networks is the way of sharing spectrum between macro and femto cells. If network operators adopt a {\em spectrum splitting policy} where macro and femto cells orthogonally use the resource for convenience of implementation, they can be free from the macro-to-femto interference. However, such a splitting policy needs to determine the ratio of optimal splitting that varies depending on the configuration of cells, e.g., density and position, resulting in spectrum inefficiency. Alternatively, a {\em spectrum sharing   policy} between macro and femto cells can be adopted to maximally reuse the resource. However, the macro-to-femto interference may harm the system performance unless appropriate IM algorithms are employed. It will be interesting, especially for the network operators, to answer which of these two polices is better under what circumstances.

\smallskip
Motivated by the above, we propose a novel IM algorithm, called {\em REFIM} (REFerence based Interference Management), whose core features are summarized as follows.

\begin{compactenum}[1)]
\item With the notion of {\em reference user}, each BS can approximate the interference impact of all other cells with a single user to which the BS generates the most significant interference. This abstraction substantially simplifies the problem, resulting in the power control algorithm with low computational and signaling overhead.
\item Due to the nonconvexity of the power control problem, different initial power settings may lead to different solutions. We empirically show that running the power control algorithm with the powers used at the previous slot as the initial powers can have effect of removing multiple loops without much performance degradation.
\item In the original power control based on the reference user, it requires to feedback per-user information in a cell to the neighboring BSs at every slot. We reduce such a heavy message passing overhead over backhauls both temporally and spatially.
\item REFIM has a nice feature of {\em incremental deployability} that partial deployment in some specific regions, probably starting from the regions that experience small capacity due to severe interference, sufficiently increases the capacity in those regions, but not affecting other regions. This is a desirable property for network operators who cannot afford to upgrade the IM module in all BSs.
\item We also demonstrate that, as long as an appropriate IM such as REFIM is adopted, the performance of the spectrum sharing policy is much better than the best performance of the spectrum splitting policy that the number of subchannels is optimally divided between macro and femto cells.
\end{compactenum}

\medskip
The remainder of this paper is organized as follows. In Section \ref{System_Model}, we formally describe our system model and general problem. In Section \ref{section_proposed_framework}, we propose a reference user based power allocation and user scheduling with feedback reduction ideas. In Section \ref{section_ComplexityAnalysis}, we analyze the computational and signaling complexity. In Section \ref{section_simulation}, extensive simulations under various configurations demonstrate the performance of the proposed algorithm compared to previous algorithms. Finally, we conclude the paper in Section \ref{section_conclusion}. 
\section{System Model and Problem Definition}\label{System_Model}
\subsection{Network and Traffic Model}
We consider a wireless cellular network consisting of multiple heterogeneous BSs, where $\mathcal{N}_{macro}$, $\mathcal{N}_{pico}$ and $\mathcal{N}_{femto}$ are the set of macro, pico and femto BSs, respectively. Denote by $\mathcal{K} \doteq \{1, \ldots , K\}$ and $\mathcal{N} = \mathcal{N}_{macro} \cup \mathcal{N}_{pico} \cup \mathcal{N}_{femto} \doteq \{ 1, \ldots , N\}$ the set of users and BSs, respectively. BSs and users are equipped with one transmit and one receive antenna, respectively. Each user is assumed to be connected to a single BS. Denote by $\mathcal{K}_n$ the (nonempty) set of users associated with the BS $n$, i.e., $\mathcal{K} = \mathcal{K}_1 \cup \cdots \cup \mathcal{K}_N $ and $\mathcal{K}_n \cap \mathcal{K}_m = {\o}$, for $n \neq m$. A full buffer traffic model with infinite data packets in the queue for each user at its associated BS is used to consider best-effort traffic. Macro and pico BSs can in general exchange information very fast with each other because there are connections between them via high-speed wired dedicate backhauls directly or through a base station controller. On the other hand, femto BSs can be assumed to exchange information slowly through user's Internet connections as backhauls.

\subsection{Resource and Allocation Model}
We consider a system where a subchannel is a group of subcarriers as the basic unit of resource allocation. Assume that there are $S$ number of subchannels and all BSs can use all the subchannels for data transmission, i.e., universal frequency reuse. Denote by $\mathcal{S} \doteq \{1, \ldots , S\}$ the set of subchannels. We focus on the downlink transmissions in the time-slotted system. At each slot, each BS needs to determine (i) \textit{which user is scheduled on each subchannel} and (ii) \textit{how much power is allocated for each scheduled user on each subchannel}.

{\em User scheduling constraint}: In regard to (i), denote by $\bm I_s (t) \doteq [ I_s^{k,n} (t): k \in \mathcal{K}, n \in \mathcal{N} ]$ the user scheduling indicator vector, i.e., $I_s^{k,n} (t)=1$ when BS $n$ schedules its associated user $k$ on subchannel $s$ at slot $t$, and 0 otherwise. Furthermore, we denote the user scheduled by BS $n$ on subchannel $s$ at slot $t$ by $k(n,s,t)$. Reflecting that at most only one user can be selected in each subchannel for each BS, we should have:
\begin{equation}\label{eq:I_constraint}
\sum_{k \in \mathcal{K}_n} I_s^{k,n} (t) \leq 1, \quad \forall n \in \mathcal{N}, s \in \mathcal{S}.
\end{equation}

{\em Power constraint}: In regard to (ii), denote the transmit power of BS $n$ on subchannel $s$ at slot $t$ by $p_s^n (t)$. The vector containing transmit power of all BSs on subchannel $s$ is $\bm p_s (t) \doteq [p_s^1 (t), \cdots p_s^N (t)]^T$. In parallel, the vector containing transmit powers of all subchannels for BS $n$ is $\bm p^n (t) \doteq [p_1^n (t), \cdots p_S^n (t)]^T$. Each BS is assumed to have the total power budget and spectral mask constraints:
\begin{eqnarray}\label{eq:P_constraint}
\sum_{s \in \mathcal{S}} p_s^n (t) \!\!\!&\leq&\!\!\! P^{n,max}, \quad
\forall n \in
\mathcal{N}, \\
p_s^n (t) \!\!\!&\leq&\!\!\! P_s^{n,mask}, \quad \forall n \in
\mathcal{N}, s \in \mathcal{S}.
\end{eqnarray}
In practice, a typical transmit power of macro BSs is around 43dBm, which is 20$\sim$30dBm higher than that of small BSs. For notational simplicity, the time-slot index $(t)$ is dropped unless confusion arises.

\subsection{Link Model}
In this paper, we do not consider advanced multiuser detection or interference cancellation, and hence the interference from other BSs is treated as noise. We focus on the spectrum level coordination\footnote{More performance gain may be achieved by canceling inter-cell interference using signal level coordination, such as CoMP (Coordinated Multi Point Transmission and Reception) addressed in the LTE-Advanced, which is beyond the scope of this paper.}, i.e., finding multi-channel power allocation of each BS in order to improve system performance by mitigating the interference. For a given power vector $\bm p_s$, the received SINR for user $k$ from BS $n$ on subchannel $s$ can be written as:
\begin{equation}\label{eq:receivedSINR}
\gamma_s^{k,n}(\bm p_s) = \frac{ g_s^{k,n} p_s^n } {\sum_{m \neq n}
g_s^{k,m} p_s^m + \sigma_s^k},
\end{equation}
where $p_s^n$ and $g_s^{k,n}$ representing the nonnegative transmit power of BS $n$ on subchannel $s$ and the channel gain between BS $n$ and user $k$ on subchannel $s$ during a slot, respectively; $\sigma_s^k$ is the noise power. The channel gain is time-varying and takes into account the path loss, log-normal shadowing, fast fading, etc.

Following the Shannon's formula, the achievable data rate [in bps] for user $k$ on subchannel $s$ is given by:
\begin{equation}
r_s^{k,n} (\bm p_s) = \frac{B}{S} \log_2 \left( 1 + \frac{1}{\Gamma}
\gamma_s^{k,n} (\bm p_s) \right),
\end{equation}
where $B$ denotes the system bandwidth; $\Gamma$ denotes the SINR gap to capacity which is typically a function of the desired bit error ratio (BER), the coding gain and noise margin, e.g., $\Gamma = \frac{-\ln(5 \text{BER})}{1.5}$ in M-QAM (quadrature amplitude modulation)\cite{refGoldsmithVariable}. Note that $r_s^{k,n}(\bm p_s)$ is the potential data rate when the user $k$ is scheduled for service by BS $n$ on subchannel $s$ and its actual data rate becomes zero when another user is scheduled, i.e., $r_s^{k,n}(\bm p_s, \bm I_s) = I_s^{k,n} \cdot r_s^{k,n}(\bm p_s)$. We assume that $\Gamma=1$ and drop $B/S$ mainly for simplicity, but our results can be readily extended to other values of $\Gamma$ and $B/S.$

\subsection{General Problem Statement}
Our objective is to develop a slot-by-slot joint power allocation and user scheduling algorithm that determines $\big( \bm p(t) \big)_{t=0}^\infty$ and $\big( \bm I(t) \big)_{t=0}^\infty$,  where $\bm p(t) \doteq (\bm p_s(t), s \in \mathcal{S})$ and $\bm I(t) \doteq (\bm I_s(t), s \in \mathcal{S})$. The long-term achieved throughput vector $\mathbf{R} = \left(R_k: k \in \mathcal{K} \right)$, where $R_k = {\displaystyle \lim_{t \rightarrow \infty}}\frac{1}{t} \sum_{\tau=1}^t \sum_{s \in \mathcal{S}} r_s^{k,n}(\bm p_s(\tau), \bm I_s(\tau))$, is the solution of the following optimization problem:
\begin{eqnarray}
  (\textbf{Long-term P}): & \displaystyle{\max} & \quad \displaystyle{ \sum_{k \in K} U_k(R_k) }\\%
  & \textrm{subject to}   & \quad  \mathbf{R} \in \mathcal{R},
\end{eqnarray}
where $U_k(\cdot)$ is a concave, strictly increasing, and continuously differentiable utility function for user~$k$; $\mathcal{R} \subset \mathbb{R}_+^{K}$ is the set of all achievable rate vectors over long-term, referred to as {\em throughput region}.
Note that this optimization problem is challenging to solve, since we aim at devising an instantaneous, distributed algorithm even though the constraint set $\mathcal{R}$ is neither available to the users nor the BSs.

With the help of the stochastic gradient-based technique in \cite{refStolyarOn} that selects the achievable rate vector maximizing the sum of weighted rates where the weights are marginal utilities at each slot, it suffices to solve the following slot-by-slot problem ({\bf P}) which produces the long-term rates that is the optimal solution of the ({\bf Long-term P}).
\begin{eqnarray}
  \hspace{-1.0cm}(\bm{P}): &\hspace{-0.3cm}\displaystyle \max_{\bm p, \bm I} \quad &\hspace{-0.3cm} h(\bm p,
  \bm I) = \sum_{k \in \mathcal{K}} w_k \sum_{s \in \mathcal{S}}
  r_s^{k,n}(\bm p_s, \bm I_s) \label{eq:P_objective}\\
  &\hspace{-0.3cm}\displaystyle \text{subject to} \quad &\hspace{-0.3cm} \sum_{k \in \mathcal{K}_n}
  I_s^{k,n} \leq 1, \quad \quad  \forall n \in
  \mathcal{N}, s \in \mathcal{S}, \label{eq:P_constraint1}\\
  &\quad &\hspace{-0.3cm} \sum_{s \in \mathcal{S}} p_s^n \leq
  P^{n,max}, \quad \forall n \in \mathcal{N}, \label{eq:P_constraint2}\\
  &\quad &\hspace{-0.3cm} p_s^n \leq P_s^{n,mask}, \; \quad \quad \forall n \in
  \mathcal{N}, s \in \mathcal{S},\label{eq:P_constraint3}%
\end{eqnarray}
where $w_k > 0$ is the derivative of its utility $w_k = \frac{dU(R_k)} {dR_k}|_{R_k=R_k(t)}$ that can be interpreted as the weight of user $k$ for the slot. For example, we can set $w_k$ as the inverse of its average throughput ${1}/{R_k(t)}$ to achieve proportional fairness among users \cite{refLiuOpportunistic}.
Since the system objective is a nonconvex function of the transmit powers and is also tightly coupled with integer variables of the scheduling indicators, the problem $(\bm{P})$ is a mixed-integer nonlinear programming (MINLP). Unfortunately, it is known in \cite{refCendrillonOptimal} that even the simplified problem, in which user scheduling issue is eliminated (i.e., $|\mathcal{K}_n|=1$ for all $n \in \mathcal{N}$), is computationally intractable. To find a global optimal solution, we need to fully search the space of the feasible powers for all BSs with a small granularity along with the all possible combinations of user scheduling. Thus, even for a centralized algorithm, it may not be feasible in practical systems to solve (${\bm P}$) at each slot.

\section{REFIM: Reference User Based Interference Management}\label{section_proposed_framework}
\subsection{Joint Power Allocation and User Scheduling}

\smallskip
\noindent{\bf \em User scheduling for fixed power allocation}
\smallskip

We now present our proposed approach to solve the problem $(\bm{P})$. Note first that for any given feasible power allocation, the original problem can be decomposed into intra-cell user scheduling problems.

\smallskip
\begin{lemma}
For any fixed feasible power allocation $\bm p$, the original problem $(\bm{P})$ can be reduced to $N \times S$ independent subproblems for each BS $n$ and subchannel $s$ as follows:
\begin{eqnarray}
&\displaystyle \max_{\bm I_s} \quad & \sum_{k \in \mathcal{K}_n} w_k
\cdot I_s^{k,n} \cdot r_s^{k,n}(\bm p_s) \label{eq:subproblem_objective} \\
&\displaystyle \text{subject to} \quad & \sum_{k \in \mathcal{K}_n}
I_s^{k,n} \leq 1. \label{eq:subproblem_constraint}
\end{eqnarray}
\end{lemma}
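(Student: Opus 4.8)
The plan is to exploit the fact that once the power allocation $\bm p$ is held fixed, every achievable rate $r_s^{k,n}(\bm p_s)$ is just a fixed nonnegative constant, so the objective $h(\bm p, \bm I)$ in (\ref{eq:P_objective}) becomes a \emph{linear} function of the scheduling indicators and the constraints that still bind $\bm I$ are block-separable. First I would substitute $r_s^{k,n}(\bm p_s, \bm I_s) = I_s^{k,n}\, r_s^{k,n}(\bm p_s)$ into (\ref{eq:P_objective}) and use the fact that $\{\mathcal{K}_n\}_{n \in \mathcal{N}}$ partitions $\mathcal{K}$, so that each user $k$ has a unique serving BS, to rewrite the single sum $\sum_{k \in \mathcal{K}}$ as the double sum $\sum_{n \in \mathcal{N}} \sum_{k \in \mathcal{K}_n}$. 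Interchanging the (finite) order of summation then gives
\[
h(\bm p, \bm I) = \sum_{n \in \mathcal{N}} \sum_{s \in \mathcal{S}} \Big( \sum_{k \in \mathcal{K}_n} w_k\, I_s^{k,n}\, r_s^{k,n}(\bm p_s) \Big),
\]
i.e., a sum of $N \times S$ terms in which the $(n,s)$-th term depends only on the block of variables $\bm I_s^n \doteq (I_s^{k,n}: k \in \mathcal{K}_n)$.

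Next I would check that the feasible set for $\bm I$ is precisely the Cartesian product of the per-block feasible sets. The power-budget and spectral-mask constraints (\ref{eq:P_constraint2})--(\ref{eq:P_constraint3}) involve only $\bm p$, which is fixed and feasible by hypothesis; hence they impose no restriction on $\bm I$ and simply drop out. The only constraint left on $\bm I$ is the user-scheduling constraint (\ref{eq:P_constraint1}), and for each pair $(n,s)$ it reads $\sum_{k \in \mathcal{K}_n} I_s^{k,n} \leq 1$, which couples exactly the entries of $\bm I_s^n$ and no others; together with the implicit binary constraint $I_s^{k,n} \in \{0,1\}$, this is block-local. Since the index sets $\mathcal{K}_n$ are pairwise disjoint and $(n,s)$ ranges over all of $\mathcal{N} \times \mathcal{S}$, the blocks $\{\bm I_s^n\}$ are disjoint and exhaust all the scheduling variables.

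Finally I would invoke the elementary principle that the maximum of a separable sum over a product set equals the sum of the maxima of the summands over their own factors: because the objective splits as $\sum_{n,s} f_{n,s}(\bm I_s^n)$ and the feasible region splits as $\prod_{n,s}\{\bm I_s^n : \sum_{k \in \mathcal{K}_n} I_s^{k,n} \leq 1,\ I_s^{k,n} \in \{0,1\}\}$, a scheduling vector $\bm I$ is optimal for $(\bm P)$ with $\bm p$ fixed if and only if, for every $(n,s)$, its block $\bm I_s^n$ is optimal for (\ref{eq:subproblem_objective})--(\ref{eq:subproblem_constraint}). This is exactly the claimed reduction. There is no genuine obstacle here --- the lemma is essentially a bookkeeping argument --- and the only points deserving a moment's care are verifying that (\ref{eq:P_constraint1}) never couples variables across distinct $(n,s)$ blocks and that (\ref{eq:P_constraint2})--(\ref{eq:P_constraint3}) truly become vacuous once $\bm p$ is frozen.
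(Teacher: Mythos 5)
Your proposal is correct and follows essentially the same route as the paper's own proof: rewrite the objective as $\sum_{n}\sum_{s}\bigl(\sum_{k\in\mathcal{K}_n} w_k I_s^{k,n} r_s^{k,n}(\bm p_s)\bigr)$ using the partition $\{\mathcal{K}_n\}$, observe that with $\bm p$ fixed the rates are constants and constraint (\ref{eq:P_constraint1}) couples only the variables within each $(n,s)$ block, and conclude by separability. Your version is merely more explicit about the feasible set being a product set and about (\ref{eq:P_constraint2})--(\ref{eq:P_constraint3}) becoming vacuous, which the paper leaves implicit.
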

\begin{proof}
For the given power allocation $\bm p$, we can rewrite $h(\bm p, \bm I)$ as follows:
\begin{eqnarray*}
\hspace{-0.5cm} h(\bm p, \bm I) &=& \sum_{n \in \mathcal{N}} \sum_{k \in
\mathcal{K}_n} w_k \sum_{s \in \mathcal{S}} I_s^{k,n} \cdot
r_s^{k,n}(\bm p_s) \\
&=& \sum_{n \in \mathcal{N}} \sum_{s \in \mathcal{S}} \Big[\sum_{k
\in \mathcal{K}_n} w_k \cdot I_s^{k,n} \cdot r_s^{k,n}(\bm p_s)\Big].
\end{eqnarray*}
As $w_k$ and $r_s^{k,n}(\bm p_s)$ are given parameters, we only have to investigate dependencies among $I_s^{k,n}$. Since the constraint (\ref{eq:P_constraint1}) for the given BS $n$ and subchannel $s$ does not affect the other BSs and subchannels at all, the original problem can be decomposed and is equivalent to individually solving the $N \times S$ subproblems in (\ref{eq:subproblem_objective}) and (\ref{eq:subproblem_constraint}) for each BS and subchannel.
\end{proof}

\smallskip
Accordingly, an optimal user scheduling algorithm under the given power $\bm p$ is easily obtained by
\begin{equation}
I_s^{k,n} \!=\! \left\{
\begin{array}{rl} {}
\!1, &\hspace{-0.15cm}\displaystyle \textrm{if } k = k(n,s) = \arg \max_{k \in \mathcal{K}_n} w_k \cdot r_s^{k,n} (\bm p_s),\\
\!0, &\hspace{-0.15cm}\textrm{otherwise}.%
\end{array}\right.
\label{eq:user_scheduling}
\end{equation}

\medskip
\noindent{\bf \em Power allocation for fixed user scheduling}
\smallskip

For any given user scheduling $\bm I$, the original problem reduces to the following power allocation problem:
\begin{equation*}
\begin{split}
\displaystyle \max_{\bm p} \;\; \sum_{n \in \mathcal{N}} &\sum_{s \in \mathcal{S}} w_{k(n,s)}
\! \log_2 \!\! \left( \!1\!+\! \frac{g_s^{k(n,s),n}p_s^n}{\sum_{m \neq n}g_s^{k(n,s),m}p_s^m \!+\! \sigma_s^{k(n,s)}} \! \right) \\
\displaystyle \text{subject to} \quad &\sum_{s \in \mathcal{S}} p_s^n \leq P^{n,max}, \quad \forall n \in \mathcal{N},\\
\displaystyle \quad  & p_s^n \leq P_s^{n,mask}, \;\; \quad \quad \forall n \in \mathcal{N}, s \in \mathcal{S}.
\end{split}
\end{equation*}
Solving the above problem requires the knowledge of all interference channel gains across cells and noise power, forcing it to operate in a centralized fashion.

To overcome this complexity and develop a distributed scheme with small message passing, we introduce the concept of {\em reference user}. Let $\mathcal{N}(n)$ be the set of neighboring BSs\footnote{If there is any chance that users in BS $n$ will handed over to certain adjacent BSs, then we consider such a set of BSs as the neighboring BSs of BS $n$, denoted by $N(n)$. This set can be determined a priori by mobile network operators at the time of deployment and/or maintained in online based on the signal strengths between the BSs.} of BS $n$, and further denote by $A(n,s)$ the set of all scheduled users on subchannel $s$ in $\mathcal{N}(n)$, i.e., $A(n,s) = \left\{ k(m,s) \;|\; m \in \mathcal{N}(n) \right\}$. The reference user for BS $n$ on subchannel $s$ is defined as the user among $A(n,s)$ which has {\em the strongest channel gain} between the user and BS $n$ on subchannel $s$. We separately denote the index of BS to which the reference user belongs and the reference user by $\text{ref}_s^n$ and $k(\text{ref}_s^n,s)$, respectively. We will elaborate on the way of choosing the reference users shortly at the end of this subsection.


Once the reference user is fixed, each BS tries to find its own power allocation taking into account just one reference user per subchannel instead of solving the above problem considering all ($N$ number of) cochannel users in the network. This approximation comes from the intuition that just considering {\em the worst-case user} may be a good approximation of the case when all the users are included. The problem $(\bm{P_n})$ to be solved by each BS $n$ can be written as follows.
\begin{eqnarray}
(\bm{P_n}): &\hspace{-0.3cm} \displaystyle \max_{\bm p_n} &
\sum_{s \in \mathcal{S}} w_s^{n} \log_2 \! \Bigg( 1 \!+\!
\frac{g_s^{n}p_s^n}{\displaystyle \sum_{m \neq n}g_s^{n,m}p_s^m + \sigma_s^{n}}\Bigg) \\
&& \!+\! \sum_{s \in \mathcal{S}} w_s^{\text{ref}_s^n} \log_2 \! \Bigg( 1 \!+\!
\frac{g_s^{\text{ref}_s^n} p_s^{\text{ref}_s^n} }{\displaystyle \sum_{m \neq {\text{ref}_s^n}} g_s^{{\text{ref}_s^n},m}p_s^m + \sigma_s^{\text{ref}_s^n}}\Bigg) \nonumber\\
&\hspace{-0.7cm}\displaystyle \text{subject to}  & \sum_{s \in \mathcal{S}} p_s^n \leq P^{n,max}, \\ \label{eq:P_n_constraint}
&\hspace{-0.7cm} & p_s^n \leq P_s^{n,mask}, \; \quad \quad \forall s \in \mathcal{S}. %
\end{eqnarray}
Note that we replace the index of scheduled users $k(m,s)$  with the corresponding index of BSs $m$ to keep our notations simple.\footnote{We use the index of BSs instead of the index of scheduled users as follows: $w_s^n \leftarrow w_{k(n,s)}$, $g_s^{n} \leftarrow g_s^{k(n,s),n}$, $g_s^{n,m} \leftarrow g_s^{k(n,s),m}$, $\gamma_s^{n} \leftarrow \gamma_s^{k(n,s),n}$, $\sigma_s^n \leftarrow \sigma_s^{k(n,s)}$, $w_s^{\text{ref}_s^n} \leftarrow w_{k({\text{ref}_s^n},s)}$, $g_s^{{\text{ref}_s^n}} \leftarrow g_s^{k({\text{ref}_s^n},s),{\text{ref}_s^n}}$, $g_s^{{\text{ref}_s^n},m} \leftarrow g_s^{k({\text{ref}_s^n},s),m}$, $\gamma_s^{{\text{ref}_s^n}} \leftarrow \gamma_s^{k({\text{ref}_s^n},s),{\text{ref}_s^n}}$ and $\sigma_s^{\text{ref}_s^n} \leftarrow \sigma_s^{k({\text{ref}_s^n},s)}$.}

For any given user scheduling and reference user selection, the corresponding optimal power allocation must satisfy Karush-Kuhn-Tucker (KKT) conditions \cite{refBookBoyd}. In particular, let $\lambda^n$ denote the nonnegative Lagrange multiplier associated with the total power budget constraint \eqref{eq:P_n_constraint}. Then, the optimal $\lambda^n$ and $p_s^n$ must satisfy the following equalities:
\begin{equation}
\begin{split}
\displaystyle p_s^n = &\left[ \frac{w_s^n}{\lambda^n \ln 2+ t_s^{n}} - \frac{\sum_{m \neq n} g_s^{n,m} p_s^m + \sigma_s^n }{g_s^{n}} \right]_{0}^{P_s^{n,mask}}\!\!,\\
&\textrm{where} \quad t_s^{n} = \frac{w_s^{\text{ref}_s^n} g_s^{{\text{ref}_s^n},n} \gamma_s^{{\text{ref}_s^n}}}{\sum_{l \in \mathcal{N}} g_s^{{\text{ref}_s^n},l} p_s^l + \sigma_s^{\text{ref}_s^n}}\;,\label{eq:KKT_condition1}
\end{split}
\end{equation}
\begin{eqnarray}
\hspace{-3.5cm}\lambda^n \big( \sum_{s} p_s^n - P^{n,max} \big) = 0,\label{eq:KKT_condition2}
\end{eqnarray}
where $\left[\cdot \right]_a^b \doteq \min \left[ \max \left[\cdot, a \right], b \right]$ and $\gamma_s^{{\text{ref}_s^n}}(\cdot)$ is the received SINR as defined earlier in \eqref{eq:receivedSINR}. Note that the $t_s^n$ term can be interpreted as a taxation (or penalty) term. If the reference user is close to the BS $n$ (i.e., high interference channel gain $g_s^{{\text{ref}_s^n},n}$), then the value of $t_s^n$ increases. Consequently, BS $n$ lowers its power level to reduce the harm to the reference user.


Since the modified problem $(\bm{P_n})$ is still nonconvex \cite{refVenturinoCoordinated, refChiangPower}, \eqref{eq:KKT_condition1} and \eqref{eq:KKT_condition2} are the first-order \textit{necessary} conditions. Therefore, there might exist a duality gap to the optimal primal solution. However, encouraged by the state-of-the-art asymptotic result \cite{refCendrillonOptimal} that the duality gap becomes zero when the number of subchannels is large, we develop an effective approximation algorithm for the problem $(\bm{P_n})$ based on the conditions (\ref{eq:KKT_condition1})-(\ref{eq:KKT_condition2}). Note that a fixed point equation of $p_s^n$ in (\ref{eq:KKT_condition1}) is a monotonic function of $\lambda^n$. Thus, it can be solved via a fast bisection method. Starting from an initial power allocation and $\lambda^n$, we calculate the power $p_s^n$ and taxation term $t_s^{n}$ in (\ref{eq:KKT_condition1}) for all subchannels. If the sum of updated power exceeds $P^{n,max}$, then $\lambda^n$ is increased. Otherwise, $\lambda^n$ is decreased. With the updated power, we repeat this until the equation (\ref{eq:KKT_condition2}) holds. If no positive value of $\lambda^n$ matches the equality, then $\lambda^n$ is set to be zero. In the latter case (interfering too much), the BS $n$ does not use all of its available power.

\smallskip
\begin{remark}\label{remark:taxation}
If the taxation term $t_s^{n}$ is ignored, our power allocation algorithm is reduced to the water-filling (WF) algorithm \cite{refPalomarPractical}, where each BS acts selfishly in order to maximize its own performance. By adding this term $t_s^{n}>0$, each BS operates in a social way by considering the reference user and lowers the water-filling level, which could lead to a globally better solution.
\end{remark}

\medskip
\noindent{\bf \em General algorithm description: joint user scheduling and power allocation}
\smallskip

TABLE \ref{tbl:extended_algorithm} describes a conceptual pseudo-code of the general algorithm for our problem. At each slot, each BS starts from a proper power allocation. For the given power, each BS first executes the user scheduling and then abstracts what is happening in the neighboring environment. For example, the concept of reference user can be used as one of the abstraction methods. For this given user scheduling and neighboring environment abstraction, the BS iteratively updates its power and effects to the neighboring environment until $\bm p$ converges or the maximum number of iterations is reached. Then each BS repeats the user scheduling and the neighboring environment abstraction for the updated power and goes into the power allocation loop again. This procedure is repeated until the user scheduling $\bm I$ converges or the maximum number of iterations is reached.

\begin{table}[]
\begin{center} \caption{General Algorithm Description}
\separation
\label{tbl:extended_algorithm}
\begin{enumerate}
\item[1:] Power initialization
\item[2:] {\bf repeat} (user scheduling loop):
\item[3:] \hspace{1cm}User scheduling
\item[4:] \hspace{1cm}Neighboring environment abstraction
\item[5:] \hspace{1cm}{\bf repeat} (power allocation loop):
\item[6:] \hspace{2cm}Update the effect to the neighboring environment
\item[7:] \hspace{2cm}Power allocation
\item[8:] \hspace{1cm}{\bf until} $\bm p$ converges or max \# of iterations is reached
\item[9:] {\bf until} $\bm I$ converges or max \# of iterations is reached
\end{enumerate}
\vspace{-0.1cm}\separation
\end{center}\vspace{-0.1cm}
\end{table}


The general algorithm not only has a prohibitively high computational complexity due to inner and outer loops, but also requires multiple information exchanges per slot between BSs to reflect the updated interference level followed by the updated power. However, the multiple feedbacks in a single slot are practically impossible because each user can send its own measurement information to the BS once per slot. To overcome this complexity, we will propose a simplified algorithm in subsection \ref{subsection_simplified}, which executes user scheduling and power allocation step-by-step without loops.

\subsection{Online Reference User Selection Method}\label{subsection_reference_user}
Based on the notion of reference user, each BS needs to consider the only one user instead of all scheduled users in the network on each subchannel.

From BS $n$'s point of view, although its transmit power interferes with all the scheduled users on the corresponding subchannel in other BSs, the effect will be critical especially for the the user who has the strongest channel gain (or the most influenced user from BS $n$). This dominant victim user can be found in the neighboring (or adjacent) BSs $\mathcal{N}(n)$. Therefore, we propose an online reference user selection method, in which each BS $n$ chooses the reference user on each subchannel $s$ as follows:

\begin{separation}
\vspace{0.1cm}
{\bf Reference User Selection Rule}
\vspace{-0.4cm}
\separator
\vspace{-0.4cm}
\begin{equation}
\begin{split}
&k(\text{ref}_s^n,s) \textrm{ is the reference user on subchannel $s$}, \\
&\displaystyle \textrm{ where } \text{ref}_s^n = \arg \max_{m \in \mathcal{N}(n)} g_s^{k(m,s),n}. \label{eq:reference_user}
\end{split}
\end{equation}
\vspace{-0.3cm}
\end{separation}
It is worthwhile mentioning that the reference user is independently and locally selected by each BS on each subchannel and thus no centralized coordination is necessary. Fig. \ref{fig:reference user selection} depicts an example of our online reference user selection procedure.

\begin{figure}[t]
\centering \epsfig{file=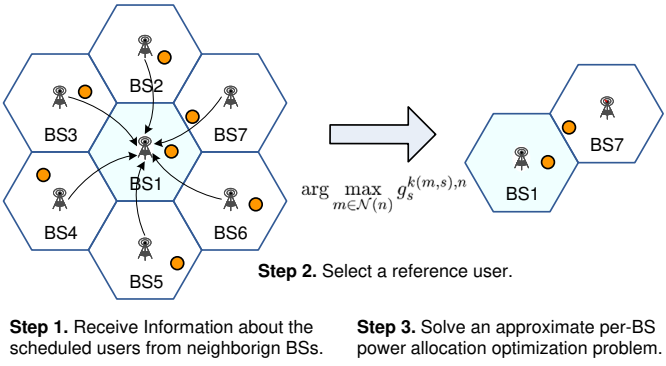,width=0.48\textwidth}
\caption{Online reference user selection method.}
\label{fig:reference user selection}
\end{figure}

The rule in \eqref{eq:reference_user} provides a guideline for choosing a reference user. There may be other methods such as (i) making one {\em virtual} user by averaging channels of the scheduled users from neighboring BSs, and (ii) selecting {\em multiple} reference users (e.g., select the $M$ worst users). As we will show later in subsection \ref{subsection_simulation1}, such variations do not lead to the high performance improvement, compared to the high increase in complexity.

\subsection{Feedback Reduction}
To determine a reference user at each slot, each BS $n$ requires {\bf (F0)} the channel gain $g_s^{k(m,s),n}$ of the scheduled users in neighboring cells $m \in \mathcal{N}(n)$. We call these users the {\em candidates} of reference users. Once the reference user is selected, to calculate a taxation term for power control, additional information about the reference user is necessary. The followings are the required information for the reference user:
\smallskip
\begin{compactenum}[\bf (F1)]
\item {The weight of the reference user $w_{k(\text{ref}_s^n,s)}$,}
\item {The received signal strength of the reference user $g_s^{k({\text{ref}_s^n},s),{\text{ref}_s^n}} p_s^{\text{ref}_s^n}$,}
\item {The noise plus interference strength of the reference user $\sum_{m \neq {\text{ref}_s^n}} g_s^{k({\text{ref}_s^n},s),m}p_s^m + \sigma_s^{k({\text{ref}_s^n},s)}$.}
\end{compactenum}
\smallskip

The information for the candidates needs to be collected by neighboring BSs and be forwarded to BS $n$. However, femto BSs use their Internet connection as backhauls, and thus message exchange may not be reliable and fast enough. Besides, there is no guarantee on the feedback latency. Even for macro BSs with a dedicated backhaul network, the per-slot message exchange may be a large overhead.
We present a more practical solution to reduce the backhaul feedback overhead both temporally and spatially.


\smallskip
{\em Temporal feedback reduction:} Instead of the per-slot message exchange for all the information, (i) each candidate user first calculates the time-average of the information and send them to its associated BS infrequently (say, every $T\gg1$ slots), (ii) and then the BS broadcasts the information about all candidate users to its neighboring BSs through wired backhauls. The only thing that the BSs exchange at each slot is the indexes of the scheduled users. Each BS will maintain a table that contains these averages of candidate users. Once each BS receives the indexes of scheduled users from neighboring BSs, then it uses the information in the table corresponding to the indexes. Note that the slow feedback mechanism can be applied to femto BSs.

{\em Spatial feedback reduction:} First, we reduce the amount of infrequent feedback by making macro BSs send the information only for edge users. This idea comes from the intuition that the users in the center of cells are not likely to be selected as the reference user because the center users do not receive too much interference.

Second, we eliminate the per-slot feedback of indexes of scheduled users for femto BSs. The indexes of scheduled users (very small amount of information) can be easily exchanged between neighboring macro BSs at each slot through dedicated backhauls. However, this per-slot message exchange is not possible for femto BSs because their backhauls do not provide any guarantee on the feedback latency. To overcome this difficulty, we propose an alternative solution for the femto BSs that does not require per-slot message exchange at all. In the proposed solution, the femto BSs obtain the indexes of scheduled users by overhearing downlink control message (e.g, DL-map in the IEEE 802.16e \cite{refWiMAX}) from neighboring macro BS. This may need slight modification frame structure for the femto BS in the current wireless standards.

The remaining challenge is on the reverse direction, i.e., sending the indexes of scheduled users in the femto BSs to the neighboring macro BSs. We pay attention to one of the main features of femto cells, that is, the small coverage. In other words, the distances between users in a femto BS are relatively much shorter than the distances from the neighboring macro BSs. Thus, from neighboring macro BSs' perspective, it seems reasonable to assume that the users in the femto cell are spatially located at the same point. Based on this spatial simplification, the neighboring macro BSs simply can pick any user in the femto cell and consider as the candidate user.

\begin{table*}[t]
\begin{center} \caption{Complexity comparison of various algorithms}
{\renewcommand{\arraystretch}{1.3}
\label{tbl:computation_signaling_complexity}
\begin{tabular}{c|c|c|c|c} \hline
Algorithms & \multicolumn{2}{c|}{Computational complexity} & \multicolumn{2}{c}{Signaling complexity (inter-BSs)}\\ \hline%
& User scheduling & Power allocation & Per-slot feedback & Periodic feedback \\ \hline\hline%
EQ & $\mathcal{O}(SK)$ & {\em {\em Zero}} & {\em Zero} & {\em Zero}\\ \hline%
WF & $\mathcal{O}(SK)$ & $\mathcal{O}(S)$ or $\mathcal{O}\left(\log_2 \frac{P_{max}}{\epsilon} \right)$ & {\em Zero} & {\em Zero}\\ \hline%
MGR & $\mathcal{O}(SK)$ & $\frac{1}{n_p} \mathcal{O}(S) + n_v \mathcal{O}(SK) $ & {\em Zero} & $|N_{n}| S$ \\ \hline%
REFIM & $\mathcal{O}(SK)$ & $\mathcal{O}(S)$ or $\mathcal{O}\left(\log_2 \frac{P_{max}}{\epsilon} \right)$ & macro: $\rho S$ \& femto: {\em Zero} & $\rho |\mathcal{K}_n| A S$ \\ \hline%
MC-IIWF & \multicolumn{2}{c|}{$T_1\cdot\left(\mathcal{O}(SK)+\mathcal{O}(SN)\right)$} &
\multicolumn{2}{c}{Complete information is assumed.} \\ \hline
\end{tabular}}
\end{center}
\end{table*}

\subsection{Initial Power Setting}
Our algorithm requires an initial power value to compute the power allocation (see line 1 in Table I). Since our problem is a nonconvex problem, different starting points may lead to different solutions with different speeds. The following three strategies for the choice of initial power are carefully investigated in this paper.
\smallskip
\begin{compactenum}
\item {\em Uniform rule.} The power allocation always starts from the
  same point for every slot. Each BS uniformly splits its maximum
  transmission power to all subchannels, i.e., $p_{s}^{n,init} (t) =
  P^{n,max}/S$.
\item {\em Random rule.} Each BS randomly chooses the initial
  power level for each subchannel between 0 and $P^{n,max}/S$, and then
  each BS scales it up with an appropriate weight
  ${P^{n,max}}/{\sum_{s} p_s^{n,init}(t)}$ to use up the total
  transmission power budget, i.e., $\sum_{s} p_s^{n,init}(t) =
  P^{n,max}$.
\item {\em Previous rule.} Each BS starts from the power used at the
  previous slot, i.e., $p_{s}^{n,init} (t) = p_{s}^{n} (t-1)$.
\end{compactenum}
\smallskip
We will demonstrate later in subsection \ref{subsection_simulation1} that although we avoids multiple loops in a slot for power allocation and user scheduling, the performance of the previous rule is likely to remain unchanged while the other rules lose the performance much. This is because in some sense the previous rule exploiting temporal correlation has the effect of iterations for power allocation in a slot-by-slot manner. This result encourages us to design a simplified algorithm in subsection \ref{subsection_simplified}, which gets rid of multiple loops in a slot and executes user scheduling and power allocation sequentially with the previous rule.

\subsection{REFIM: Reference Based Interference Management} \label{subsection_simplified}
We now propose our final algorithm, called REFIM (REFerence based Interference Management), in Table \ref{tbl:proposed_algorithm} that merges all the components developed in the above. REFIM adopts the previous rule for initial power setting (see line 1), and uses the notion of reference user for the neighboring environment abstraction and limits the number of reference users to one (see line 3). While the general algorithm has user scheduling and power allocation loops (see lines 2 and 5 in Table \ref{tbl:extended_algorithm}), REFIM executes user scheduling (see line 2) and power allocation (see line 5) sequentially without loops. This step-by-step approach can not only be done very fast in a slot, but it requires the feedback from each user just once per slot. Through extensive simulations, such a simple algorithm will be shown to be efficient.

\begin{table}[h!]
\begin{center} \caption{REFIM: Reference based Interference Management}
\separation
\label{tbl:proposed_algorithm}
\begin{enumerate}
\item[1:] Power initialization ${\bm p^n(t)}\leftarrow{\bm p^n(t-1)}$
\item[2:] User scheduling according to (\ref{eq:user_scheduling}):\\
\hspace{0.3cm}$ \displaystyle k(n,s) = \arg \max_{k \in \mathcal{K}_n} w_k \cdot r_s^{k,n} (p_s^n)$.
\item[3:] Reference user selection according to (\ref{eq:reference_user}).
\item[4:] Taxation update according to (\ref{eq:KKT_condition1}):\\
%
\item[5:] Power allocation via bisection:\\
$\left[a,b\right]\leftarrow[0,\lambda^n_{max}]$.\\
{\bf while} $|\sum_{s} p_s^n - P^{n,max}| < \delta$,\\
\hspace{0.3cm}Set $\lambda^n=(a+b)/2$ and update ${\bm p^n}$ according to (\ref{eq:KKT_condition1}).\\
\hspace{0.3cm}$\textrm{if } \sum_{s} p_s^n > P^{n,max},\;\; \textrm{then } \left[a,b\right]\leftarrow[\lambda^n,b]$,\\
\hspace{0.3cm}$\textrm{else } \sum_{s} p_s^n < P^{n,max},\;\; \textrm{then } \left[a,b\right]\leftarrow[a,\lambda^n]$.\\
{\bf end while}
\end{enumerate}
\vspace{-0.1cm}\separation
\end{center}\vspace{-0.1cm}
\end{table}

\section{Complexity Analysis}\label{section_ComplexityAnalysis}

In this section, we analyze the computational complexity and inter-BSs signaling complexity of REFIM compared to conventional equal power allocation (EQ) and selfish water-filling (WF), as well as MGR (Multi-sector GRadient) \cite{refStolyar2} and MC-IIWF (MultiCell Improved Iterative Water Filling) \cite{refVenturinoCoordinated}. Table \ref{tbl:computation_signaling_complexity} summarizes the results of complexity analysis.

Computational complexity consists of two parts: the complexity from user scheduling and power allocation. User scheduling has a linear complexity $\mathcal{O}(SK)$ with the number of users for each subchannel for all algorithms except MC-IIWF. For power allocation, EQ has zero complexity. WF can be obtained by either an exact algorithm $\mathcal{O}(S)$ or an iterative algorithm (i.e., a bisection method that converges to a solution with a certain error tolerance $\epsilon$) $\mathcal{O}\left(\log_2 \frac{P_{max}}{\epsilon} \right)$ \cite{refPalomarPractical}. The only difference between WF and REFIM is the taxation term considering the reference user. Thus, the complexity of power allocation for REFIM is basically the same as that for WF. MGR in \cite{refStolyar2}, one of the state-of-the-art dynamic IM algorithms, adjusts the power allocation for every $n_p>1$ slots (relatively slowly) and condenses the complexity for updating power to $1/n_p$. However, MGR has additional complexity from a virtual scheduling that needs to be run $n_v$ times per slot. Accordingly, the total computational complexity for power allocation is high, $\frac{1}{n_p} \mathcal{O}(S) + n_v \mathcal{O}(SK)$.
Another recently developed MC-IIWF in \cite{refVenturinoCoordinated} is the centralized algorithm that has iteration loops for power allocation and user scheduling. Let $T_1$ be the number of iterations needed for iteration loops. Then, the total computational complexity is equal to $T_1\cdot\left(\mathcal{O}(SK)+\mathcal{O}(SN)\right)$.

Now let us investigate the inter-BSs signaling complexity. EQ and WF do not require any inter-BS message passing overhead because they are autonomous algorithms  without considering neighboring BSs, but at the cost of performance reduction, as will be shown in Section~\ref{section_simulation}. MGR adjusts the power allocation slowly so that it requires not per-slot but periodic feedback, $|N_{n}| S$ (sensitivity information for neighboring BSs and all subchannels). MC-IIWF, a centralized algorithm, assumes a central control unit to have complete information. REFIM requires the periodic feedback about the candidate users for the reference users, $\rho |\mathcal{K}_n| A S$, where $\rho$ is the average percentage of edge users and $A=4$ is the number of required information {(F0)$\sim$(F3)} about the reference user. Note that while macro BSs require the per-slot feedback for the indexes of scheduled users at each subchannel, femto BSs do not.

In summary, the computational complexity of REFIM is the same as that of WF and is much lower than that of state-of-the-art dynamic IM algorithms such as MGR and MC-IIWF. For signaling complexity, although the feedback per slot-wise manner is challenging, the information needs to be exchanged only between neighboring macro BSs at each slot are just the indexes of scheduled users. We believe that such small amount of information can be easily exchanged through high-speed dedicated backbones.


\section{Performance Evaluation}\label{section_simulation}
We verify the system performance through extensive simulations under various topologies and scenarios. First, in order to verify the effectiveness of REFIM by varying several tunable parameters and comparing it with other algorithms, a two-tier macro-cell networks composed of 19 hexagonal cells is considered in subsection \ref{subsection_simulation1}. Second, in order to provide more realistic simulation results, a real 3G BS deployment topology consisting of heterogeneous environments (urban, suburban and rural areas) is considered in subsection \ref{subsection_simulation2}. Third, a heterogeneous network topology with small cells inside macro cells is also considered in subsection \ref{subsection_simulation3}.

In our simulations, macro and small cells are loaded with 20 and 4 users, respectively and they are uniformly distributed in each cell. All users are assumed to have a logarithmic utility function, i.e., $U(R_k) = \log{R_k}$, but the other utility functions enforcing more fairness are also considered in our technical report \cite{refSonTechReport}. We consider a system having 16 subchannels each of which consists of multiple subcarriers. The maximum transmit powers of macro and small BSs are 43dBm and 15dBm \cite{refWINNER}, respectively. In modeling the propagation environment, an ITU PED-B path loss model $16.62+37.6\log_{10}(d[m])$ for macro cells and an indoor path loss model $37+32\log_{10}(d[m])$ for small cells with 10dB penetration loss due to walls are adopted. Jakes' Rayleigh model (with the speed of 3km/h), where the channel coefficients vary continuously over slots, is adopted for fast fading. The channel bandwidth and the time-slot length are set to be 10MHz and 1ms, respectively.

REFIM is compared to conventional EQ and WF, as well as MGR \cite{refStolyar2} and MC-IIWF \cite{refVenturinoCoordinated} developed recently. As performance metrics, the geometric average of user throughputs (GAT) and the average of edge user throughputs (AET) are used. We use GAT since maximizing this metric is equivalent to our system objective (sum of log throughputs). We consider AET as the average of the bottom 5\% of user throughputs (i.e., 5th percentile throughput), which can be regarded as a representative performance metric of cell-edge users.

\subsection{Effectiveness of Proposed Algorithm}\label{subsection_simulation1}
Fig. \ref{fig:verification_a} shows the GAT performance of REFIM for the number of reference users per subchannel and that of EQ as a baseline. As mentioned in Remark \ref{remark:taxation}, REFIM without a reference user is reduced as WF. REFIM taking reference users into consideration can obtain higher performance gain. It is noteworthy that considering only the one reference user per subchannel is efficient enough because it can obtain more than 97\% of the performance considering all the six neighbors. Fig. \ref{fig:verification_b} shows the effect of iteration loops and initial power: (i) adding user scheduling loop and/or power allocation loop gives additional performance gain from any initial power setting and (ii) using power at the previous slot as an initial power outperforms other two strategies. However, for the case in which power at the previous slot is used as an initial power, the performance gain from adding power allocation loop is marginal. Fig. \ref{fig:verification_b} is an encouraging result that leads us to design the algorithm without loops and use the previous power as an initial power.

\begin{figure}[t!] \centering
\subfigure[The number of reference users]
{\hspace{-0.1cm}\epsfig{file=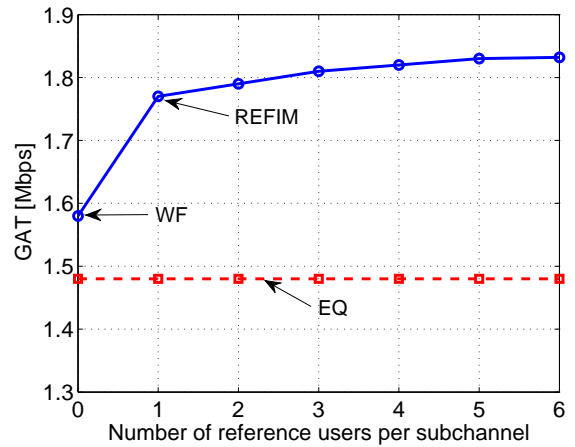,width=0.4\textwidth}\label{fig:verification_a}}\\
\subfigure[Iteration loops and initial power]
{\hspace{-0.1cm}\epsfig{file=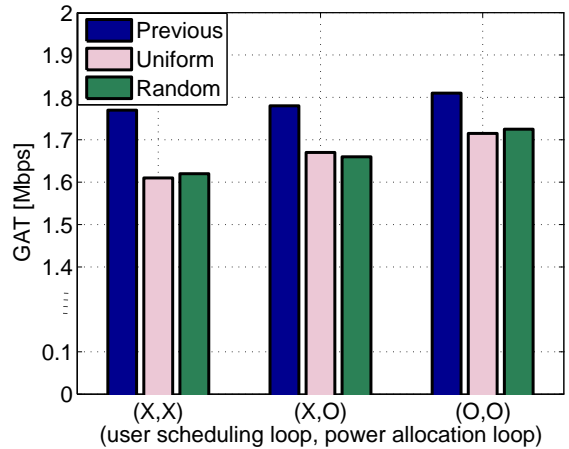,width=0.4\textwidth}\label{fig:verification_b}}
\caption{Effect of several tunable parameters.}
\label{fig:verification}
\end{figure}

\begin{figure}[t!]\centering
\subfigure[Time series of transmit powers for different subchannels]
{\epsfig{file=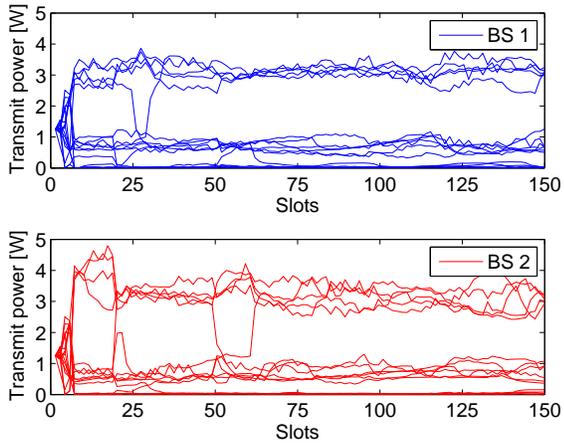,width=0.4\textwidth}\label{fig:power_level_a}}\\
\subfigure[Average transmit powers between 75 and 125 slots]
{\hspace{-0.6cm}
\epsfig{file=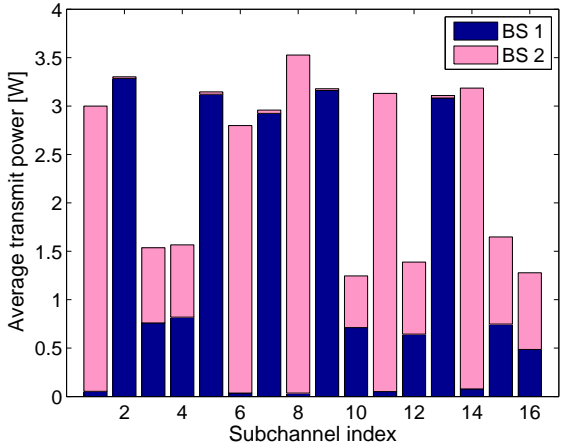,width=0.4\textwidth}\label{fig:power_level_b}}\\
\caption{Transmit power levels for different subchannels in a linear two-cell network.}
\label{fig:power_level}
\end{figure}

We conjecture that this is because in some sense using the previous power rule has the effect of iterations not in a slot but over a series of multiple slots. To verify this statement, we provide additional simulations in a linear two-cell network where the distance between BSs is 2km. Each cell has two groups of users: center and edge users, whose distances from the associated BS are within 200-400m and 700-900m, respectively. We adopt the simple network configuration to gain insight easily as well as for easy of presentation, but all discussions can be extended to the general cases.

Fig. \ref{fig:power_level_a} shows the time-series of transmit powers on different subchannels. Although the powers do not seem to quite converge, they remain the certain levels for several dozens of slots. In Fig. \ref{fig:power_level_b}, we plot the average transmit power levels on different subchannels during the period between 75 and 125 slots. As can be clearly seen, each BS exclusively utilizes five subchannels of high powers and shares six subchannels of low powers with the other. We further investigate the relationship between the user groups and their subchannels on which they are scheduled. Interestingly, for the most of time (more than 98\% of slots), the center and edge users are served by the set of subchannels with low and high powers, respectively. In other words, it is highly likely that on each subchannel a user will be selected by the scheduler who has similar channel conditions to the user selected at the previous slot. Thus, given similar user scheduling over consecutive slots, the power allocation with the previous power can be interpreted as if it has iterations over a series of multiple slots. This explains why REFIM that executes user scheduling and power allocation step-by-step in a slot can achieve a good solution without much performance degradation.

\smallskip
In Fig. \ref{fig:feedback_period}, we also test the effect of outdated feedback information about reference users. We consider two types of users with different speeds: nomadic users (or stationary users that have fixed path loss and shadowing factor, but have time-varying fast-fading) and mobile users (moving fast with speed of 60km/h). For nomadic users, the GAT performance degradation is relatively small, even though we choose a long feedback period such as 200 slots. For mobile users, the GAT performance naturally decreases due to the error of feedback information as the feedback period increases. In practical systems, different types of users with different speeds are expected to coexist. In such an environment, to reduce the amount of feedbacks while maintaining the performance degradation marginal, it is essential to adaptively control the period of feedback for the different speed users, e.g., $T=200$ slots for nomadic users ($\sim$3km/h) and $T=10$ slots for mobile users ($\sim$60km/h)\footnote{We use the simplified version of random waypoint model \cite{refBrochRWP}, where each user starts from an initial point, randomly chooses its destination, and moves toward it at a given speed. After reaching the destination, it repeats this process unite the end of simulation time.}.

\begin{figure}[t]
\centering \epsfig{file=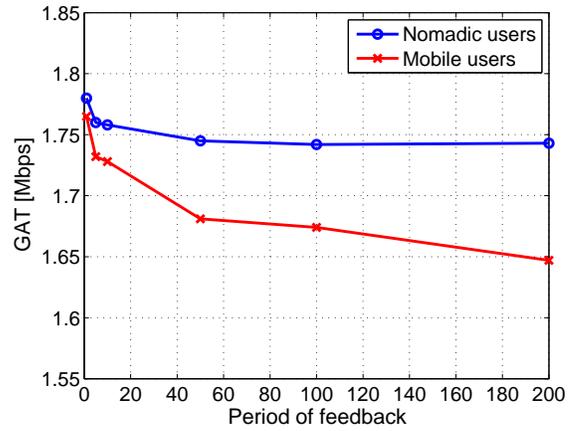,width=0.4\textwidth}
\caption{The effect of outdated feedback information.}
\label{fig:feedback_period}
\end{figure}

\begin{figure}[t]
\centering
\epsfig{file=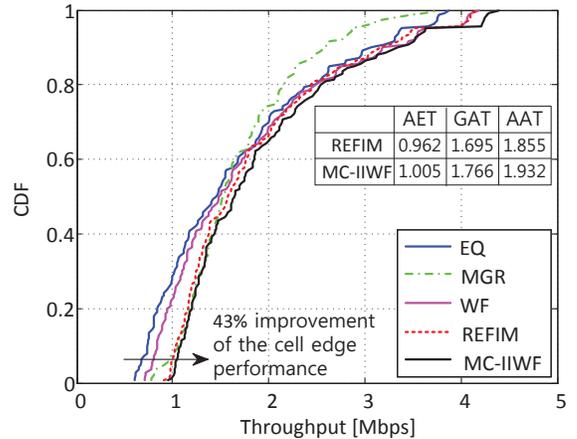,width=0.4\textwidth}
\caption{Comparison with other algorithms.}
\label{fig:comparison_symmetric}
\end{figure}

\subsection{Performance Comparison with Other Algorithms}
Now we compare the performance of REFIM with other four algorithms: conventional EQ, selfish WF, MGR \cite{refStolyar2} which adjusts power allocation infrequently, compared to per-slot basis user scheduling, and MC-IIWF \cite{refVenturinoCoordinated} which is a centralized algorithm achieving the near-optimal performance. Fig. \ref{fig:comparison_symmetric} shows the CDF (cumulative distribution function) of the throughput of entire users in the network for different algorithms. Compared to EQ, WF and MGR, REFIM can improve the throughputs for all users in the network. Particularly, we can observe higher improvement (43\% improvement in AET compared to EQ) for users achieving low throughputs, i.e., users at cell edges. This is due to the fact that IM is mainly targeted for performance improvement of cell-edge users. In addition, REFIM can achieve about 95\% of the performance of near-optimal MC-IIWF in terms of two representative throughput metrics (GAT and AET) as well as the arithmetic average of user throughputs (AAT). It is somewhat surprising that such a simple distributed algorithm can obtain a similar performance to the centralized algorithm that is hard to implement due to prohibitive complexity.

\subsection{Topology with Real BS Deployment: Urban, Suburban and Rural Environments}\label{subsection_simulation2}

\begin{figure}[t] \centering
\vspace{0.3cm}\hspace{-0.15cm}
\epsfig{file=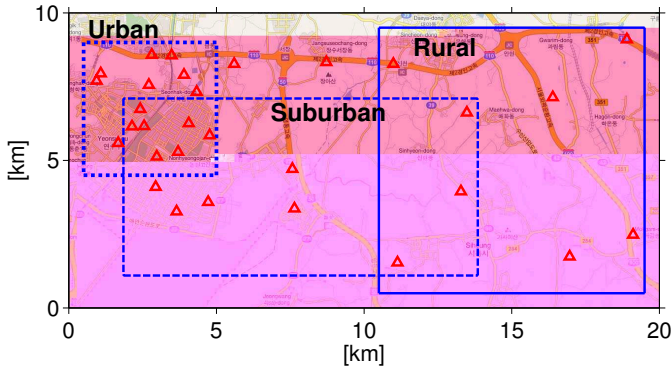,width=0.48\textwidth}
\caption{Real 3G BS deployment map.}
\label{fig:real_topology1}
\end{figure}

Fig. \ref{fig:real_topology1} depicts the map of BS layout that we use for more realistic simulations. It is a part of real 3G network operated by one of the major mobile network operators in Korea. There are a total of 30 BSs within $20\times10\;\textrm{km}^2$ rectangular area. We assume that the number of BSs per unit area is proportional to the user density. In other words, the average number of users per cell is almost similar because BSs in an urban environment cover a small area and BSs in a rural environment a large area. Under this assumption, we generate users one-by-one in the rectangular area and attach them to the closest BS until each BS will have 20 users. We choose this partial map to include a challenging scenario that three environments are mixed together. We tested several other maps, and obtained similar or even better performance of REFIM.

\begin{figure}[t] \centering
\subfigure[GAT (geometric average of user throughputs)]
{\epsfig{file=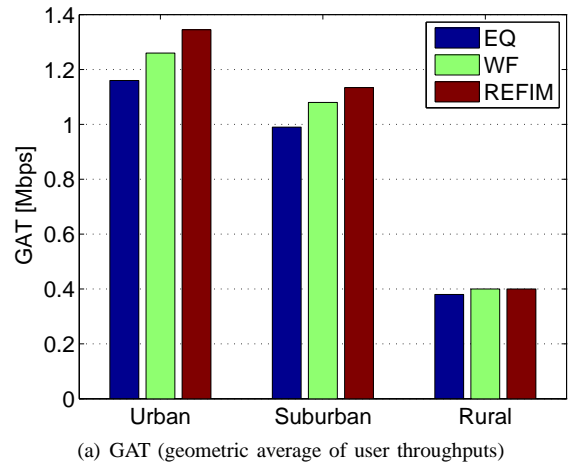,width=0.4\textwidth}\label{fig:real_topology_GAT}}\\
\subfigure[AET (average of edge user throughputs)]
{\epsfig{file=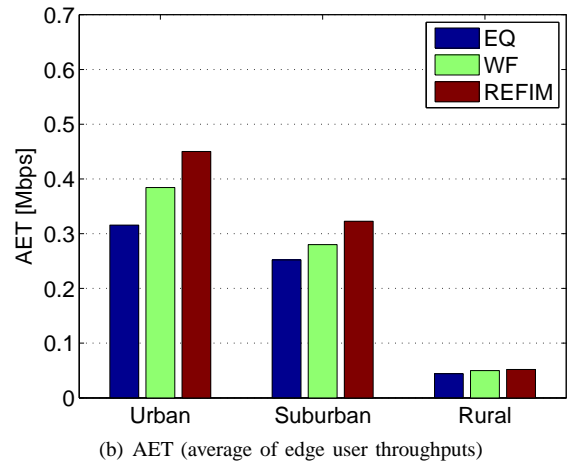,width=0.4\textwidth}\label{fig:real_topology_AET}}%
\caption{Throughput performances in real BS deployment topology.}
\label{fig:real_topology_throughput}
\end{figure}

\begin{figure}[h!] \centering
\epsfig{file=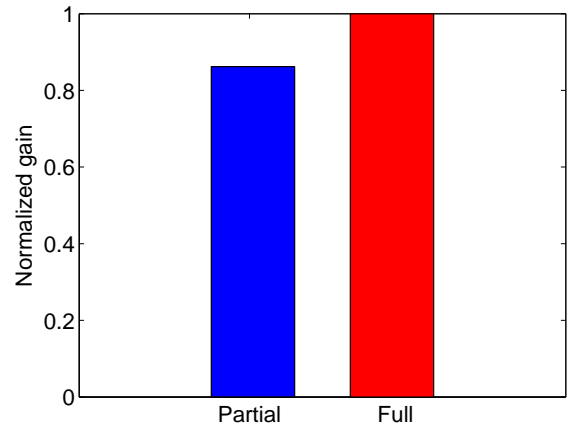,width=0.4\textwidth}
\caption{Effect of partial deployment.}
\label{fig:partial_deployment}
\end{figure}

We examine three different zones: urban (15 BSs in $4.5\times4.5\;\textrm{km}^2$), suburban (15 BSs in $12\times6\;\textrm{km}^2$) and rural (8 BSs in $9\times9\;\textrm{km}^2$) areas\footnote{In order to see clearly how the density of BSs affects the performance gain, the distance between BSs in suburban and rural zones are increased by 1.5 and 2 times, respectively.}. Fig. \ref{fig:real_topology_throughput} shows GAT and AET performance under urban, suburban and rural environments. As expected, we can obtain high performance improvements in the urban and suburban environments. However, almost low or no gain is found in the rural environment, which means that the IM does not take much effect in a sparse topology, which follows our intuition. For example, in the urban area, the performance gains of REFIM are 16\% and 42\% in terms of GAT and AET, respectively.

\begin{figure*}[t]
\centering
\subfigure[Heterogeneous network topology]
{\epsfig{file=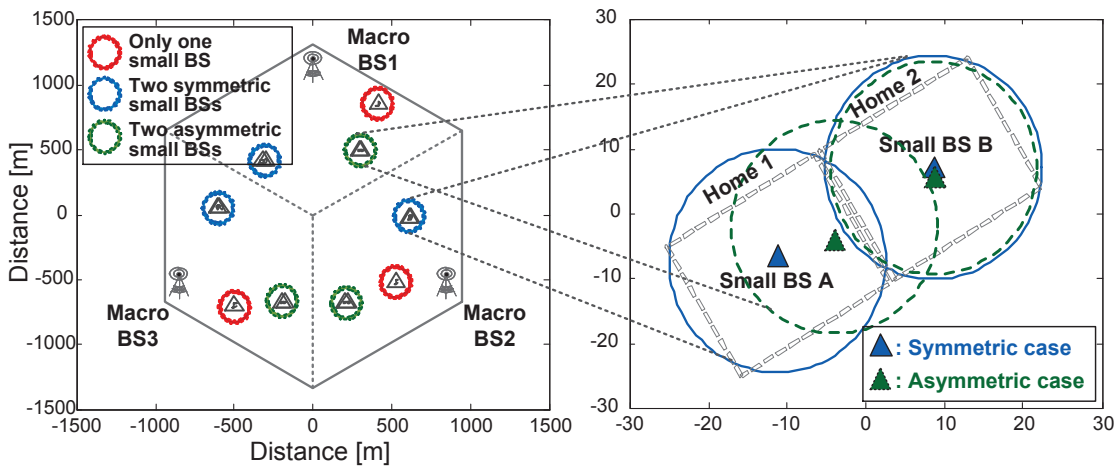,width=0.8\textwidth}\label{fig:heterogeneous_topology}}\\%
\subfigure[Five small cells per a macro cell]
{\hspace{0.5cm}\epsfig{file=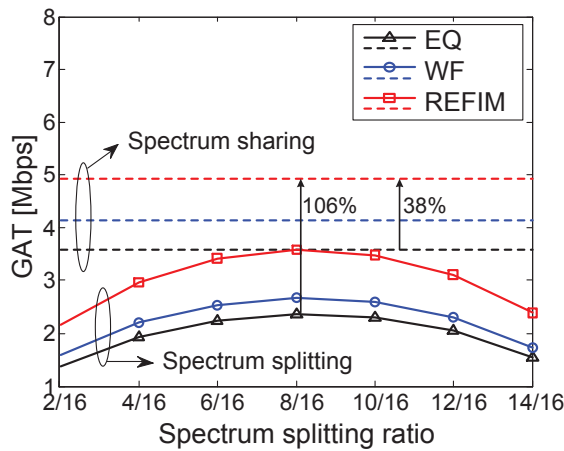,width=0.4\textwidth}\label{fig:femto5}}%
\subfigure[Ten small cells per a macro cell]
{\epsfig{file=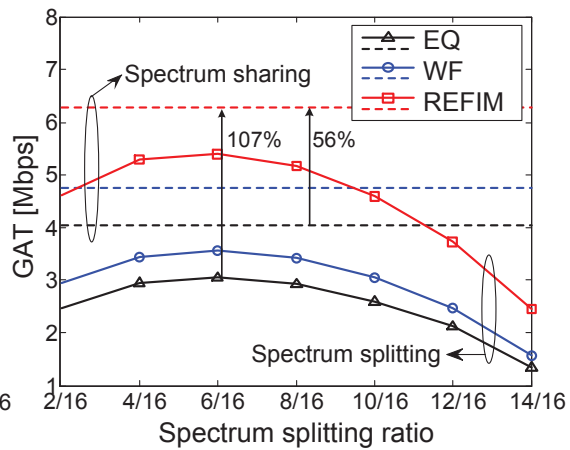,width=0.4\textwidth}\label{fig:femto10}}%
\caption{Performance in the heterogeneous network topology.}
\label{fig:femto}
\end{figure*}

Another nice feature of REFIM is \textit{incremental deployment}. Suppose that we implement our algorithm only on the BSs in a specific area. While the BSs inside this area performs well as we want, the BSs in the boundary of the area does not. This is because they may not receive information about reference users from the some of its neighboring BSs on which our algorithm is not implemented. Even in such a case, our algorithm will automatically reduce to WF. Thus, it performs like WF at least and better than EQ. Compared to the full deployment case, the partial deployment case where only 15 BSs (mainly selected from the urban areas among 30 BSs) are equipped with REFIM can achieve more than 85\% gain as shown in Fig. \ref{fig:partial_deployment}. The result encourages the mobile network operator to upgrade its BSs incrementally from the urgent ones, e.g., densely located BSs experiencing heavy interference, and thus low capacity.

\subsection{Heterogeneous Networks (Macro + Small BSs)}\label{subsection_simulation3}
Now we consider a heterogeneous network topology having several small cells inside macro cells as shown in Fig. \ref{fig:heterogeneous_topology}. As a kind of small BSs, we consider the femto BSs that are deployed and provide a high-speed indoor access mainly to home users. We need to reflect femto-to-femto cell interference as well as macro-to-macro and macro-to-femto cell interferences. To this end, we consider the mixture of three femto BSs deployment cases: (i) only one femto BS case (no femto-to-femto interference), (ii) two symmetric femto BSs case (strong femto-to-femto interference): two femto cells are adjacent with each other and each femto BS is located in the center of the home, (iii) two asymmetric femto BSs case (very strong femto-to-femto interference): two femto cells are adjacent with each other and femto BS 1 in home 1 is located at the border between homes. The last case can often happen because users locate their own femto BSs wherever they want without considering next door neighbors.

In Figs. \ref{fig:femto5} and \ref{fig:femto10}, we compare the performance of the spectrum sharing policy (i.e., universal frequency reuse) between macro and femto cells with that of the spectrum splitting policy where macro and femto cells orthogonally use the resource. Note that the performance curves for the spectrum sharing policy and the spectrum splitting policy are represented by solid and dotted lines, respectively. The x-axis represents the ratio of subchannels used by macro cells among all 16 subchannels.

In the case of five femto cells per a macro cell in Fig. \ref{fig:femto5}, there exists small cross-tier (macro-to-femto) interference. Thus, the spectrum sharing policy is always better than the spectrum splitting policy. For example, even the performance of EQ without any interference management in the spectrum sharing policy is higher than or equal to the performance of REFIM in the optimal spectrum splitting (at 8/16). If the number of femto cells increases, then the portion of macro users who will see more and closer femto cells increases. Consequently, it is highly probable that their performances are degraded by the severe cross-tier interference. As shown in Fig. \ref{fig:femto10} with ten femto cells per a macro cell, the best performance in the optimal spectrum splitting (at 6/16) can catch up with the that of EQ and WF in the spectrum sharing policy. However, if the proposed IM algorithm, e.g., REFIM, is adopted in the spectrum sharing policy, then we can mitigate the cross-tier interference, resulting in the better performance than any case in the spectrum splitting policy. Note that the performance improvements of our REFIM in the spectrum sharing policy compared to EQ in the spectrum sharing and spectrum splitting policies are 38\% and 106\% in Fig. \ref{fig:femto5}, and they become larger as the number of femto cells increases, i.e., 56\% and 107\% in Fig. \ref{fig:femto10}.

The spectrum splitting policy is expected to be widely used rather than the spectrum sharing policy in an early stage of femto cell deployments because it makes the femto cells easily coexist with macro cells without worrying about the macro-to femto interference. However, the results in Fig. \ref{fig:femto} enlighten us on the potential gain of the spectrum sharing policy. Therefore, we believe that, if the IM algorithms become more mature in the near future, then the spectrum sharing policy will be adopted in order to maximally exploit the spectral resources due to the explosive traffic demands


\section{Conclusion}\label{section_conclusion}
Heterogeneous access networks, consisting of cells with different sizes and ranging from macro to femto cells, will play a pivotal role in the next-generation broadband wireless network. They can increase the network capacity significantly to meet the explosive traffic demand of users with limited capital/operating expenditures and spectrum constraints. One of the biggest challenges in such environments is how to effectively manage interferences between heterogeneous cells. To tackle this challenge, this paper developed REFIM, which is an efficient low-complex and fully distributed IM in downlink heterogeneous multi-cell networks. Our key idea is to use the notion of reference user, which can spatially simplify the impact of all other neighboring cells by a single virtual user and result in the power control algorithm with low computational and signaling overhead. In order for REFIM to be implemented even on the femto BSs, we also further reduced the feedback over backhauls both temporally and spatially. Through extensive simulations and complexity analysis, we demonstrated that REFIM not only performs well but also is practically implementable. We also concluded that as long as appropriate IM algorithms such as REFIM are adopted, the spectrum sharing policy can outperform the best spectrum splitting policy where the number of subchannels is optimally divided between macro and femto cells.

\section*{Acknowledgments}
The authors would like to thank the anonymous reviewers for their helpful comments that greatly improved the quality of this paper. The authors also would like to thank Prof. Mung Chiang, Prof. Jianwei Huang and Dr. Paschalis Tsiaflakis for their helpful discussions.

\bibliographystyle{IEEEtran}
\bibliography{IEEEabrv,bib}

\begin{thebibliography}{10}
\providecommand{\url}[1]{#1}
\csname url@rmstyle\endcsname
\providecommand{\newblock}{\relax}
\providecommand{\bibinfo}[2]{#2}
\providecommand\BIBentrySTDinterwordspacing{\spaceskip=0pt\relax}
\providecommand\BIBentryALTinterwordstretchfactor{4}
\providecommand\BIBentryALTinterwordspacing{\spaceskip=\fontdimen2\font plus
\BIBentryALTinterwordstretchfactor\fontdimen3\font minus
  \fontdimen4\font\relax}
\providecommand\BIBforeignlanguage[2]{{%
\expandafter\ifx\csname l@#1\endcsname\relax
\typeout{** WARNING: IEEEtran.bst: No hyphenation pattern has been}%
\typeout{** loaded for the language `#1'. Using the pattern for}%
\typeout{** the default language instead.}%
\else
\language=\csname l@#1\endcsname
\fi
#2}}

\bibitem{refCisco}
``Cisco visual networking index: Global mobile data traffic forecast update,
  2009-2014,'' Feb. 2010, [Online] Available: {\url{http://www.cisco.com/en/US/
  solutions/ collateral/ns341/ns525/ ns537/ns705/ns827/white paper
  c11-520862.html}}.

\bibitem{refDataeverywhere}
``Data, data everywhere,'' Feb. 2010, [Online] Available:
  {\url{http://www.economist.com/ specialreports/displayStory.cfm? story
  id=15557443}}.

\bibitem{refChandrasekharFemtocell}
V.~Chandrasekhar, J.~G. Andrews, and A.~Gatherer, ``Femtocell networks: a
  survey,'' \emph{IEEE Communications Magazine}, vol.~46, no.~9, pp. 59--67,
  Sept. 2008.

\bibitem{ref3GPP_interference}
``3{G} home nodeb study item technical report,'' \emph{3rd Generation
  Partnership Project (3GPP), TR25.820, v8.2.0}, Aug. 2008.

\bibitem{refGiuliano}
R.~Giuliano, C.~Monti, and P.~Loreti, ``Wi{MAX} fractional frequency reuse for
  rural environments,'' \emph{{IEEE} Commun. Mag.}, vol.~15, pp. 60--65, June
  2008.

\bibitem{refSonDynamic2}
K.~Son, S.~Chong, and G.~de~Veciana, ``Dynamic association for load balancing
  and interference avoidance in multi-cell networks,'' \emph{{IEEE} Trans.
  Wireless Commun.}, vol.~8, no.~7, pp. 3566--3576, July 2009.

\bibitem{ref3GPP_SFR}
\emph{Soft frequency reuse scheme for UTRAN LTE}, 3GPP Std. R1-050\,507, May
  2005.

\bibitem{refDasDynamic}
S.~Das, H.~Viswanathan, and G.~Rittenhouse, ``Dynamic load balancing through
  coordinated scheduling in packet data systems,'' in \emph{Proc. {IEEE}
  {INFOCOM}}, San Francisco, CA, Mar. 2003, pp. 786--796.

\bibitem{refSonAdaptive}
K.~Son, Y.~Yi, and S.~Chong, ``Adaptive multi-pattern reuse in multi-cell
  networks,'' in \emph{Proc. {WiOpt}}, Seoul, Korea, June 2009, pp. 1--10.

\bibitem{refGjendemsjBinary}
A.~Gjendemsj, D.~Gesbert, G.~E. {\O}ien, and S.~G. Kiani, ``Binary power
  control for sum rate maximization over multiple interfering links,''
  \emph{{IEEE} Trans. Wireless Commun.}, vol.~7, no.~8, pp. 3164--3173, Aug.
  2008.

\bibitem{refVenturinoCoordinated}
L.~Venturino, N.~Prasad, and X.~Wang, ``Coordinated scheduling and power
  allocation in downlink multicell {OFDMA} networks,'' \emph{{IEEE} Trans. Veh.
  Technol.}, vol.~58, no.~6, pp. 2835--2848, July 2009.

\bibitem{refStolyar2}
A.~L. Stolyar and H.~Viswanathan, ``Self-organizing dynamic fractional
  frequency reuse for best-effort traffic through distributed inter-cell
  coordination,'' in \emph{Proc. {IEEE} {INFOCOM}}, Rio de Janeiro, Brazil,
  Apr. 2009, pp. 1--9.

\bibitem{refEricssonFemto}
\BIBentryALTinterwordspacing
\emph{{\rm Home NodeB Output Power}}, 3GPP TSG Working Group 4 meeting TSG-RAN
  WG1 Contribution R4-070\,969, June 2007. [Online]. Available:
  \url{http://www.3gpp.org/ftp/tsg_ran/WG4_Radio/TSGR4_43bis/Docs/.}
\BIBentrySTDinterwordspacing

\bibitem{refChandrasekharUplink}
V.~Chandrasekhar and J.~G. Andrews, ``Uplink capacity and interference
  avoidance for two-tier femtocell networks,'' \emph{{IEEE} Trans. Wireless
  Commun.}, vol.~8, no.~7, pp. 3498--3509, July 2009.

\bibitem{refChiangPower}
M.~Chiang, C.~W. Tan, D.~Palomar, D.~O'Neill, and D.~Julian, ``Power control by
  geometric programming,'' \emph{{IEEE} Trans. Wireless Commun.}, vol.~6,
  no.~7, pp. 2640--2651, July 2007.

\bibitem{refZakhourDistributed}
R.~Zakhour and D.~Gesbert, ``Distributed multicell-mimo precoding using the
  layered virtual sinr framework,'' \emph{to appear in IEEE Trans. Wireless
  Commun.}

\bibitem{refTsiaflakisDistributed}
P.~Tsiaflakis, M.~Diehl, and M.~Moonen, ``Distributed spectrum management
  algorithms for multiuser dsl networks,'' \emph{{IEEE} Trans. Signal
  Processing}, vol.~56, no.~10, pp. 4825--4843, Oct. 2008.

\bibitem{refCendrillonAutonomous}
R.~Cendrillon, J.~Huang, M.~Chiang, and M.~Moonen, ``Autonomous spectrum
  balancing for digital subscriber lines,'' \emph{{IEEE} Trans. Signal
  Processing}, vol.~55, no.~8, pp. 4241--4257, Aug. 2007.

\bibitem{refGoldsmithVariable}
A.~J. Goldsmith and S.-G. Chua, ``Variable-rate variable-power mqam for fading
  channels,'' \emph{{IEEE} Trans. Commun.}, vol.~45, no.~10, pp. 1218--1230,
  Oct. 1997.

\bibitem{refStolyarOn}
A.~L. Stolyar, ``On the asymptotic optimality of the gradient scheduling
  algorithm for multiuser throughput allocation,'' \emph{Operations Research},
  vol.~53, no.~1, pp. 12--25, Jan. 2005.

\bibitem{refLiuOpportunistic}
Y.~Liu and E.~Knightly, ``Opportunistic fair scheduling over multiple wireless
  channels,'' in \emph{Proc. {IEEE} {INFOCOM}}, San Francisco, CA, Mar. 2003,
  pp. 1106--1115.

\bibitem{refCendrillonOptimal}
R.~Cendrillon, W.~Yu, M.~Moonen, J.~Verlinden, and T.~Bostoen, ``Optimal
  multiuser spectrum balancing for digital subscriber lines,'' \emph{{IEEE}
  Trans. Commun.}, vol.~54, no.~5, pp. 922--933, May 2006.

\bibitem{refBookBoyd}
S.~Boyd and L.~Vandenberghe, \emph{Convex Optimization}, 1st~ed.\hskip 1em plus
  0.5em minus 0.4em\relax Cambirdge University Press, 2004.

\bibitem{refPalomarPractical}
D.~P. Palomar and J.~R. Fonollosa, ``Practical algorithms for a family of
  waterfilling solutions,'' \emph{{IEEE} Signal Processing Lett.}, vol.~53,
  no.~2, pp. 686--695, Feb. 2005.

\bibitem{refWiMAX}
\emph{Part 16: Air Interface for Fixed and Mobile Broadband Wireless Access
  Systems}, IEEE Std. 802.16e-2005, Feb. 2006.

\bibitem{refSonTechReport}
K.~Son, S.~Lee, Y.~Yi, and S.~Chong, ``Practical dynamic interference
  management in multi-carrier multi-cell wireless networks: A reference user
  based approach,'' Available at
  {\url{http://netsys.kaist.ac.kr/~skio/DIM_technical_report.pdf}}, {\it
  Technical Report}, Dec. 2009.

\bibitem{refWINNER}
``Aspects of {WINNER+} spectrum preferences,'' \emph{{WINNER+} Deliverable
  D3.2}, May 2009.

\bibitem{refBrochRWP}
J.~Broch, D.~A. Maltz, D.~B. Johnson, Y.-C. Hu, and J.~Jetcheva, ``A
  performance comparison of multi-hop wireless ad hoc network routing
  protocols,'' in \emph{Proc. {ACM} {Mobicom}}, Dallas, TX, Oct. 1998, pp.
  85--97.

\end{thebibliography}

\vspace{-0.5cm}
\begin{biography}[{\includegraphics[width=1in,height=1.25in,clip,keepaspectratio]{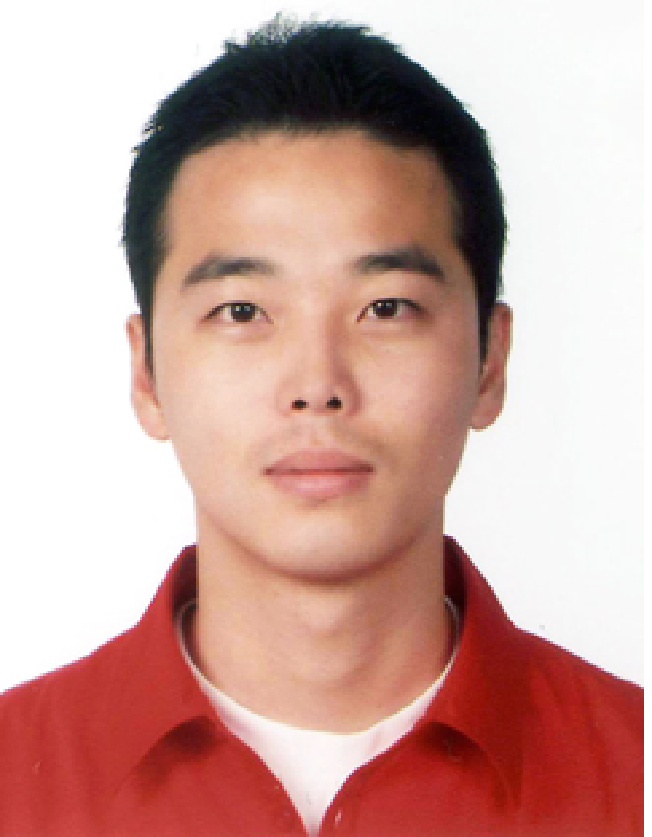}}]{Kyuho Son} (S'03-M'10) received his B.S., M.S. and Ph.D. degrees all in the Department of Electrical Engineering from Korea Advanced Institute of Science and Technology (KAIST), Daejeon, Korea, in 2002, 2004 and 2010, respectively. He is currently a post-doctoral research associate in the Department of Electrical Engineering at the University of Southern California, CA. His current research interests include interference management in heterogeneous cellular networks, green networking and network economics. He was the Web Chair of the 7th International Symposium on Modeling and Optimization in Mobile, Ad Hoc, and Wireless Networks (WiOpt 2009).
\end{biography}

\vspace{-1cm}
\begin{biography}[{\includegraphics[width=1in,height=1.25in,clip,keepaspectratio]{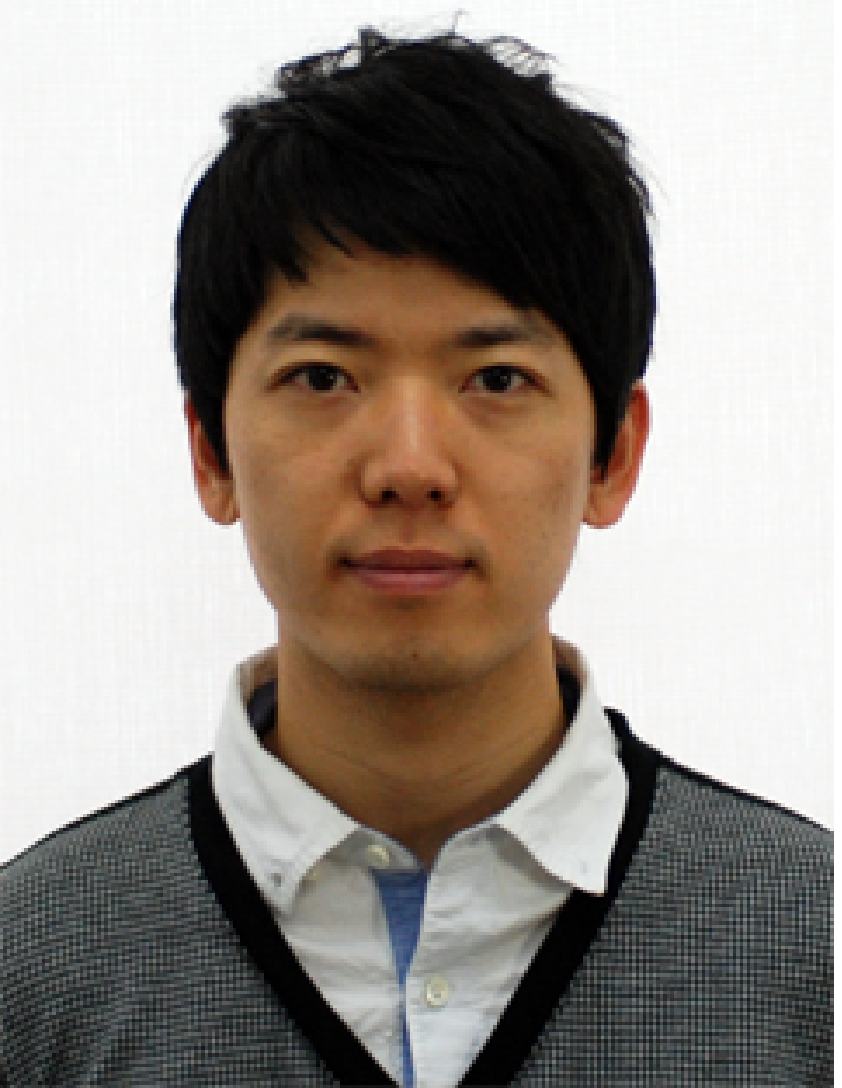}}]{Soohwan Lee} (S'11)
received his B.S. degree in the School of Electrical Engineering and Computer Science from Kyungpook National University, South Korea, in 2009, and his M.S. degree in the Department of Electrical Engineering from Korea Advanced Institute of Science and Technology (KAIST), South Korea, in 2011. He is currently a Ph.D student in the Department of Electrical Engineering at KAIST. His current research interests include interference management in heterogeneous cellular networks, green wireless networking, and security management in cellular networks.
\end{biography}

\vspace{-1cm}
\begin{biography}[{\includegraphics[width=1in,height=1.25in,clip,keepaspectratio]{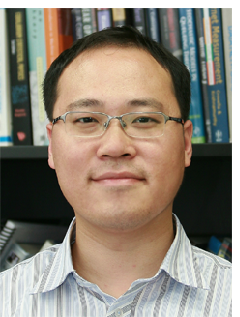}}]{Yung Yi} (S'04-M'06)
received his B.S. and the M.S. in the School of Computer Science and Engineering from Seoul National University, South Korea in 1997 and 1999, respectively, and his Ph.D. in the Department of Electrical and Computer Engineering at the University of Texas at Austin in 2006. From 2006 to 2008, he was a post-doctoral research associate in the Department of Electrical Engineering at Princeton University. Now, he is an assistant professor at the Department of Electrical Engineering at KAIST, South Korea. His current research interests include the design and analysis of computer networking and wireless communication systems, especially congestion control, scheduling, and interference management, with applications in wireless ad hoc networks, broadband access networks, economic aspects of communication networks, and green networking systems. He has been serving as a TPC member at various conferences such as ACM Mobihoc, Wicon, WiOpt, IEEE Infocom, ICC, Globecom, ACM CFI, ITC, the local arrangement chair of WiOpt 2009 and CFI 2010, and the networking area track chair of TENCON 2010.
\end{biography}

\begin{biography}[{\includegraphics[width=1in,height=1.25in,clip,keepaspectratio]{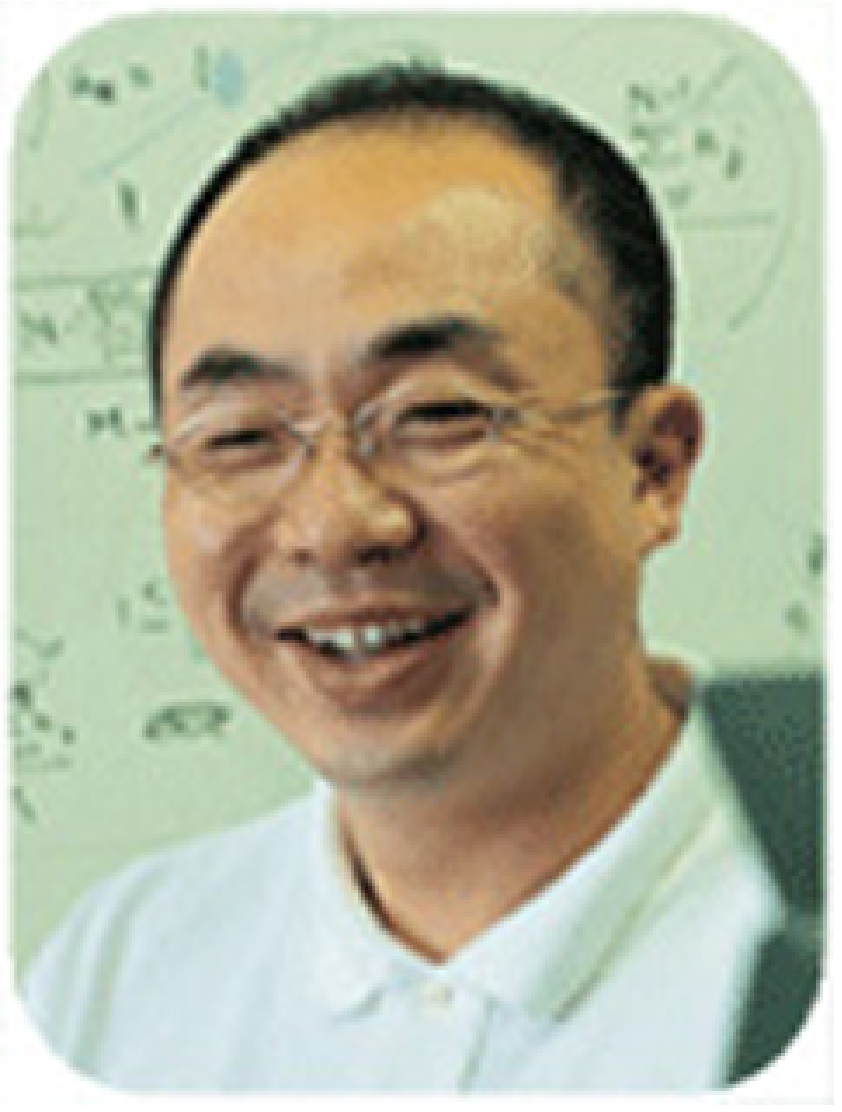}}]{Song Chong} (S'93-M'95) received the B.S. and M.S. degrees in Control and Instrumentation Engineering from Seoul National University, Seoul, Korea, in 1988 and 1990, respectively, and the Ph.D. degree in Electrical and Computer Engineering from the University of Texas at Austin in 1995. Since March 2000, he has been with the Department of Electrical Engineering, Korea Advanced Institute of Science and Technology (KAIST), Daejeon, Korea, where he is a Professor and the Head of the Communications and Computing Group of the department. Prior to joining KAIST, he was with the Performance Analysis Department, AT\&T Bell Laboratories, New Jersey, as a Member of Technical Staff. His current research interests include wireless networks, future Internet, and human mobility characterization and its applications to mobile networking. He has published more than 100 papers in international journals and conferences.

He is an Editor of Computer Communications journal and Journal of Communications and Networks. He has served on the Technical Program Committee of a number of leading international conferences including IEEE INFOCOM and ACM CoNEXT. He serves on the Steering Committee of WiOpt and was the General Chair of WiOpt '09.  He is currently the Chair of Wireless Working Group of the Future Internet Forum of Korea and the Vice President of the Information and Communication Society of Korea.
\end{biography}

\end{document}